\newlist{eqlist}{enumerate*}{1}
\setlist[eqlist]{itemjoin=\,,mode=unboxed,label={$\overset{(\alph*)}{=}$},
  ref=\theequation(\alph*),before={\let\label\ltx@label}}
\DeclareMathAlphabet{\mathcalligra}{T1}{calligra}{m}{n}
\DeclareFontShape{T1}{calligra}{m}{n}{<->s*[2.2]callig15}{}
\newcommand{\scriptr}{\mathcalligra{r}\,}
\newcommand{\bs}[1]{\boldsymbol{#1}}
\newtheorem{lemma}{Lemma}
\newtheorem{conjecture}{Conjecture}
\newtheorem{remark}{Remark}
\newtheorem{definition}{Definition}
\newtheorem{theorem}{Theorem}
\def\BibTeX{{\rm B\kern-.05em{\sc i\kern-.025em b}\kern-.08em
    T\kern-.1667em\lower.7ex\hbox{E}\kern-.125emX}}
\begin{document}

\title{On Distributed Multi-User Secret Sharing with Multiple Secrets per User}

\author{
\IEEEauthorblockN{Rasagna Chigullapally, Harshithanjani Athi, V. Lalitha\\}
\IEEEauthorblockA{International Institute of Information Technology, Hyderabad\\
Email: \{rasagna.c@research.iiit.ac.in, \\ harshithanjani.athi@research.iiit.ac.in, lalitha.v@iiit.ac.in\}}
\and
\IEEEauthorblockN{Nikhil Karamchandani\\}
\IEEEauthorblockA{Department of Electrical Engineering, \\ Indian Institute of Technology, Bombay\\
Email: \{nikhil.karam@gmail.com\}\\}
\thanks{\hrule}%

\thanks{ Nikhil Karamchandani acknowledges support from SERB via a MATRICS grant. }

}

\renewcommand{\qedsymbol}{\ensuremath{\blacksquare}}

\maketitle

\begin{abstract}
We consider a distributed multi-user secret sharing (DMUSS) setting in which there is a dealer, $n$ storage nodes, and $m$ secrets. Each user demands a $t$-subset of $m$ secrets. Earlier work in this setting dealt with the case of $t=1$; in this work, we consider general $t$. The user downloads shares from the storage nodes based on the designed storage structure and reconstructs its secrets. We identify a necessary condition on the storage structures to ensure weak secrecy. We also make a connection between storage structures for this problem and $t$-disjunct matrices. We apply various $t$-disjunct matrix constructions in this setting and compare their performance in terms of the number of storage nodes and communication complexity. We also derive bounds on the optimal communication complexity of a distributed secret sharing protocol. Finally, we characterize the capacity region of the DMUSS problem when the storage structure is specified.
\end{abstract}

\section{Introduction}
A secret sharing scheme is a cryptographic technique that distributes a secret among multiple users while maintaining two key properties: secret recovery, which allows authorized subsets of parties to reconstruct the secret from their shares, and collusion resistance, which ensures that unauthorized subsets of parties cannot learn anything about the secret. These properties are crucial in maintaining the confidentiality and integrity of the secret.

The concept of secret sharing was first introduced by Shamir \cite{AS79} and Blakley \cite{Blakley1899} in their independent works. In \cite{AS79}, Shamir proposes a secret sharing scheme based on polynomial interpolation, while in \cite{Blakley1899}, Blakley introduces a secret sharing scheme based on the intersection of subspaces. Secret sharing has been extensively studied and applied in various areas of cryptography and distributed computing, such as secure multiparty computation \cite{cramer2000general}, 
secure cloud computing \cite{takahashi2013secret}, 
secure voting systems \cite{liu2019voting}, to name a few.
Moreover, recent advances in secret sharing have enabled its use in emerging technologies such as blockchain \cite{raman2018distributed} and secure multi-party machine learning \cite{baccarini2020multi}.

In a conventional secret-sharing scenario, the key assumption is that the dealer has a direct communication channel to all users. Hence, the encoded secret shares are readily available to the users. However, in various scenarios, such as network coding and distributed storage scenarios, the communication between the dealer and users can be mediated by intermediary nodes. Specifically, in a distributed storage scenario, the dealer stores the encoded shares in the storage nodes, and the users can access a certain subset of them. This introduces complexities where the dealer has to ensure that only the authorized users can reconstruct the designated secret. 

The scenario of distributed storage was considered in recent work by Soleymani et al. \cite{SM18}. A distributed multi-user secret sharing (DMUSS) system was considered, which consists of a dealer, $n$ storage nodes, and $m$ users. In this scenario, each user has a designated secret message and is given access to a certain subset of storage nodes, where the user can download the stored data. 
To ensure that certain privacy conditions are satisfied, the Sperner family \cite{sperner} is used in obtaining these subsets of storage nodes. The dealer is treated as a master node controlling all the storage nodes. The dealer aims to securely share a specific secret $s_j$ with user $j$ via the storage nodes. Under the multi-user context, two secrecy conditions are considered, and secret sharing schemes that achieve these secrecy conditions are constructed. The \textit{weak secrecy} condition requires that each user does not get any information about the individual secrets of other users, while the \textit{perfect secrecy} condition requires that a user does not get any information about the collection of other users' secrets. Two major properties, namely, the storage overhead and the communication complexity, are defined for such distributed secret sharing systems. Optimal values for storage overhead and communication complexity were derived for any given $m$ and $n$, and protocols that achieve these optimal values simultaneously are constructed. 
The secret sharing protocols proposed in \cite{SM18} are specific to the case where each user has a designated secret message. In this paper, we consider the scenario where each user requests multiple secrets and propose protocols that achieve optimal storage overhead, ensuring weak secrecy. This can be seen as a generalization of the distributed multi-user secret system considered in \cite{SM18}. In \cite{KMM21}, the capacity region of the distributed multi-user secret sharing system under weak secrecy is characterized, where they consider the set-up in which each user can have a secret message of a different size. We generalize this result to the case where each user requests multiple secrets.

\textit{{Notation:}}
For $n\in \mathbb{N}$, define $[n]$ as the set $\{1,2,\hdots,n\}$ and for $n_1, n_2\in \mathbb{N}\cup \{0\}$, $n_1\leq n_2$, define $[n_1:n_2]$ as the set of $\{n_1,n_1+1,\hdots,n_2\}$. For a set $\mathcal{I}=\{i_1,\hdots,i_n\}$, $A_{\mathcal{I}}$ represents $\{A_{i_1},\hdots,A_{i_n}\}$. 
 
\vspace{-0.4cm}
\subsection{Our Contributions}
\vspace{-0.2cm}
In this paper, we consider a setting where each user requests a set of secrets and propose a secret sharing protocol that achieves optimal storage overhead under weak secrecy condition. The contribution and organization of this paper are as follows:
\begin{itemize}
    \item We derive a necessary condition on the storage structure of the distributed secret sharing protocol to ensure weak secrecy and establish a relation between the storage structure and the $t$-disjunct matrices. (Please see Section \ref{section:DSSPoptimalSO}, Lemma \ref{lemma:necessaryconditionsweaksecrecy}). 
    \item Using the storage structure obtained from the $t$-disjunct matrix, we propose a secret sharing protocol that achieves optimal storage overhead. (Please see Section \ref{subsection:DSSPSO}).
    \item Using several constructions for $t$-disjunct matrices, we compare the system parameters and properties. 
    We also show that $t$-disjunct matrices obtained using the Steiner system are better than those obtained from other known constructions in terms of accommodating more secrets. (Please see Section \ref{sec:wellknownconstructions}).
    \item For the DMUSS system considered in our problem, we provide a range in which the communication complexity lies and derive bounds on the optimal communication complexity. (Please see Section \ref{sec:commcomplexity}).
    \item We characterize the capacity region of the distributed multi-user secret sharing system when the storage structure is specified. (Please see Section \ref{sec:capacity}, Theorem \ref{lemma:DMUMSS}).
\end{itemize}

\section{System Model}
\subsection{System Model}
We consider a distributed secret sharing system that comprises $n$ storage nodes, $m$ ($m\geq n$) secrets, and $P = \binom{m}{t}$ users (Fig. \ref{fig:sim}), with the primary goal of enabling the dealer to convey a specific set of secrets to each user securely via storage nodes. In this system model:
\begin{itemize}
    \item Each secret $s_j$ has a \textit{storage set} $A_j \subseteq [n]$, which represents the set of all storage nodes that store the shares corresponding to secret $s_j$. For each $i \in A_j$, a share corresponding to secret $s_j$ is stored in $i$-th storage node. The set of all these storage sets is called the \textit{storage structure}, and it is denoted as 
    \vspace{-0.05in}
    \begin{align}
        \label{eq:accessstructure}
        \mathcal{A} \triangleq \{ A_j: j \in [m] \}.
    \end{align}
    \vspace{-0.25in}
    \item Storage nodes are passive, which means they do not communicate with each other. The users do not communicate with each other either. 
    \item All the secrets $s_j$, $j \in [m]$, are uniformly distributed and mutually independent. Each user $u$ requests a subset $S_u$ of $[m]$ secrets, where $|S_u| = t$.
    \item The dealer has access to all storage nodes but has no access to the users.
\end{itemize}
\begin{figure}[htbp]
  \centering
  \includegraphics[width=0.4\textwidth]{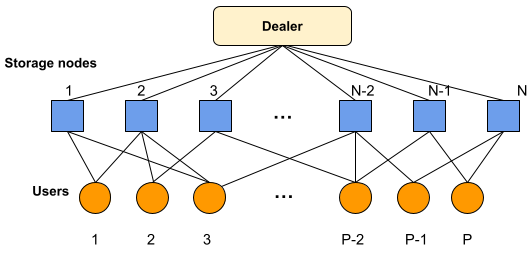}
  \setlength{\belowcaptionskip}{10pt}
  \caption{System Model}
  \label{fig:sim}
  \vspace{-5mm}
\end{figure}
 The aim is to develop a distributed secret sharing protocol that encodes secrets into shares and distributes them among storage nodes so that each user $u$ can successfully reconstruct their designated set of $t$ secrets and the secrecy condition is satisfied in a weak sense as defined below.
\begin{definition}
\label{defn:DSSP}
A distributed secret sharing protocol (DSSP) is a bundle of ($\mathcal{A}$, $\mathcal{E}$, $\bs{Z}_{n \times h}$, $\mathcal{D}$), where
\end{definition}
\begin{itemize}
    \item $\mathcal{A}$ is the storage structure defined in equation (\ref{eq:accessstructure}).
    \item $\mathcal{E}:\mathbb{F}_q^{m} \rightarrow \mathbb{F}_q^{h}$ with $h \geq m$ is an encoding function which relates to storage overhead of the system. The input $\bs{s} = (s_1,s_2,\hdots,s_m)^T$ is a vector of all secrets. The output $\bs{\mathrm{y}} = \mathcal{E}(\bs{s}) = (y_1,\hdots,y_h)^T$ is a vector of all data (shares) to be distributed and stored in the storage nodes. 
    \item $\bs{Z} = [z_{i,\scriptr}]_{n \times h}$, where $z_{i,\scriptr} = 1$ if $y_\scriptr$ is stored in $i^{th}$-storage node, otherwise $0$. We denote by $\bs{\mathrm{y}}_j$ the vector of all shares stored in nodes indexed by the elements of storage set $A_j$. The matrix $\bs{Z}$ is referred to as \textit{storing matrix}. The mapping of the output symbols to the storage nodes (specifying which output symbols are stored at each storage node) is referred to as the \textit{storage profile}. 
    \item $\mathcal{D}$ is a collection of $m$ decoding functions $\mathcal{D}_j:\mathbb{F}_q^{|\bs{\mathrm{y}}_j|} \rightarrow \mathbb{F}_q$, such that $\mathcal{D}_j(\bs{\mathrm{y}}_j) = s_j$. In other words, each user can successfully reconstruct its secrets. This is referred to as \textit{correctness} condition.
\end{itemize}

In the protocol, the \textit{weak secrecy} condition requires that a user does not get any information, in an information-theoretic sense, about the individual secrets of any other user. Let $U_j$ denote the set of all the data the user $j$ has access to, $S_{j}$ is the set of secrets requested by user $j$. Then,
\vspace{-0.2cm}
\begin{equation}
\label{eq:weaksecrecy}
    \forall j \in \left[\binom{m}{t}\right], \ \ell \in [m] \setminus S_{j}:\; H(s_\ell|U_j)=H(s_\ell).
\end{equation}

A DSSP satisfying the weak secrecy condition is also called a weakly secure DSSP.
The notions of $\textit{storage overhead}$ and $\textit{communication complexity}$ as defined in \cite{SM18} are recalled below. These are used throughout the paper to evaluate the efficiency of proposed DSSPs.
Note that the total number of $\mathbb{F}_q$-symbols stored in storage nodes is
$k' = \sum_{i=1}^{n} \sum_{r=1}^{h} z_{i,\scriptr}$, 
 where $\bs{Z} = [z_{i,\scriptr}]_{n \times h}$ is specified in Definition \ref{defn:DSSP}.
\begin{definition}
\label{defn:storageoverhead}
    The storage overhead, SO, of the DSSP is defined as
    \vspace{-0.07in}
    \begin{equation}
    \label{eq:storageoverhead}
        SO \triangleq \frac{k'}{m}.
    \end{equation}
\end{definition}
\vspace{-0.06in}
Note that the correctness condition must be satisfied for $m$ uniformly distributed and mutually independent secrets. Therefore, $k' \geq m$ and, consequently, $SO \geq 1$.
\begin{definition}
\label{defn:communicationcomplexity}
    Let $c_u$ denote the number of symbols user $u$ needs to download from the storage nodes to reconstruct the designated set of secrets $S_u$. Then the communication complexity $C$ is defined as
    \vspace{-0.5cm}
    {\small \begin{equation}
    \label{eq:CommComplexity}
        C \triangleq \sum_{u=1}^{\binom{m}{t}} c_u.
    \end{equation}}
\end{definition}
\subsection{Shamir's Secret Sharing Scheme}
Now, we describe the $(k,r)$-secret sharing scheme proposed by Shamir. Given a secret $s \in \mathbb{F}_q$, the output of the scheme consists of $k$ secret shares $d_1,d_2,\ldots,d_k \in \mathbb{F}_q$ satisfying the following conditions:
(i) Given $r$ or more secret shares, it is possible to reconstruct the secret $s$. (ii) In the information-theoretic sense, the knowledge of $r-1$ or fewer shares does not disclose any information regarding the secret $s$. Consider a $(r-1)$-degree polynomial $P(x)$ given by 
$ P(x) = s + \sum_{i=1}^{r-1}p_ix^i $, where $p_i$'s are i.i.d and are selected uniformly at random from $\mathbb{F}_q$. Let $\gamma_1,\gamma_2,\hdots,\gamma_k$ denote $k$ distinct non-zero elements from $\mathbb{F}_q$. The secret shares are then constructed by evaluating $P(x)$ at $\gamma_i$'s, i.e.,
$d_i = P(\gamma_i), \forall\ i \in [k].$ This is called $(k,r)$-Shamir's encoder.
Given any $r$ secret shares, $P(x)$ can be interpolated and is uniquely determined since the degree of $P(x)$ is at most $r-1$. This is called $(k,r)$-Shamir's decoder. 


\section{DSSP with Optimal Storage Overhead}
\label{section:DSSPoptimalSO}
\label{sec:DSSPoptimalSO}
In this section, we give a necessary condition on storage sets in a DSSP with weak secrecy, which relates to the disjunct matrices majorly used in the \textit{group testing} \cite{KS06, PR11, IKO20}. We then propose a scheme for constructing DSSPs with optimal storage overhead using the storage structure obtained from disjunct matrices.

\subsection{Conditions on storage sets to ensure weak secrecy}
\begin{lemma}
\label{lemma:necessaryconditionsweaksecrecy}
For any weakly secure DSSP with a storage structure
$\mathcal{A}$ defined in \eqref{eq:accessstructure}, we have
$A_{j_{t+1}}\nsubseteq \bigcup _{k=1}^t A_{j_k}$,  
$\forall\ j_1,j_2,\ldots,j_{t+1}\in [m]$ with $j_1\neq j_2\neq \cdots \neq j_{t+1}$.
\end{lemma}
\begin{proof}
Assume to the contrary that $A_{j_{t+1}}\subseteq (A_{j_1} \cup A_{j_2}\cup \cdots\cup A_{j_t})$ for some $j_1\neq j_2\neq \cdots\neq j_{t+1}$. This means that the user who has access to the secrets $s_{j_1},\hdots,s_{j_t}$, also has access to the secret $s_{j_{t+1}}$. This implies that the weak secrecy condition in (\ref{eq:weaksecrecy}) is violated, which is a contradiction.
\end{proof}

The collection of subsets satisfying the Lemma \ref{lemma:necessaryconditionsweaksecrecy} can be related to the columns of \textit{t-disjunct} matrices.
\begin{definition}
\label{defn:tdisjunctmatrix}
A $n \times m$ binary matrix $A$ is $t$-disjunct if the union of supports of any $t$ columns does not contain the support of any other column.
\end{definition}

Some well-known constructions for $t$-disjunct matrices are described in Section \ref{sec:wellknownconstructions}. We make the correspondence between the storage sets and $t$-disjunct matrices as follows: Consider a $t$-disjunct matrix where the columns correspond to the secrets, the rows correspond to the storage nodes, and the support of each column corresponds to the storage set of each secret.

\subsection{DSSP with Optimal Storage Overhead}
\label{subsection:DSSPSO}

Consider a system with $m$ secrets, and each user wants to access a subset of $t$ secrets, $t<m$. There can be at most $\binom{m}{t}$ users in the system. Let $\mathcal{A} = \{A_i:i\in[m]\}$ be the storage structure which consists of $m$ subsets, each corresponding to the storage sets of $m$ secrets. Suppose a user requests $\mathcal{P}\subset [m]$ secrets, then the user is given access to all the nodes in $\bigcup_{i\in \mathcal{P}}A_i$. To ensure weak secrecy, the storage structure $\mathcal{A}$ must satisfy the condition specified in Lemma \ref{lemma:necessaryconditionsweaksecrecy}. So, we consider $\mathcal{A}$ to be a set of supports of each column of a $t$-disjunct matrix. We consider $t$-disjunct matrices to have the same column weight (i.e., the same number of non-zero positions in each column). Therefore each storage set is of the same size. Consider, $|A_i| = r$, $A_i = \{n_{i,1},n_{i,2},\hdots,n_{i,r} \}, \forall i \in [m].$

For the purpose of decoding, Shamir's decoder is used. 
To initialize the protocol, we pick $n$ secrets such that the union of their storage sets is $[n]$. As each storage node is present in at least one of the storage sets in the storage structure considered, a set of $n$ such secrets always exists. Otherwise, we can ensure weak secrecy using a lesser number of storage nodes. WLOG, let these be the first $n$ secrets. To encode the secrets $s_j$, $\forall \ j\in [n]$ their $(r-1)$-degree polynomials $P_j(x)$s in (\ref{eq:polyeval}) are constructed by the following system of linear equations:
\vspace{-0.2cm}
\begin{equation}\label{eq:polyeval}
    P_{j}(x) =  s_{j}+ \sum_{l=1}^{r-1}p_{j,l}x^{l}
\end{equation}

\begin{center}
\label{eq:linearsystemofeqs}
\begin{tabular}{c|c|c}
$P_1(\gamma_1)=\alpha_{n_{1,1}} y_{n_{1,1}}$ &  $\cdots$ & $P_n(\gamma_1)=\alpha_{n_{n,1}}y_{n_{n,1}}$\\

 $P_1(\gamma_2)=\alpha_{n_{1,2}}y_{n_{1,2}}$ & $\cdots$ & $P_n(\gamma_2)=\alpha_{n_{n,2}}y_{n_{n,2}}$\\
\vdots & \vdots & \vdots \\
$P_1(\gamma_r)=\alpha_{n_{1,r}}y_{n_{1,r}}$ & $\cdots$ & $P_n(\gamma_r)=\alpha_{n_{n,r}}y_{n_{n,r}}$\\
\end{tabular}
\captionof{table}{Linear system of equations}\label{linearsystemofeqs}
\end{center}

\noindent As $\bigcup_{i=1}^n A_i = [n]$, we overlap the shares so that $\{y_{n_{1,1}},\hdots,y_{n_{1,r}},\hdots,y_{n_{n,1}}\hdots,y_{n_{n,r}}\}=y_{[1:n]}$, we have $nr$ variables and $nr$ equations. Using Lemma 1 in \cite{KMM21}, there exist some $\alpha_{n_{i,k}}$s and $\gamma_k$s such that the system has a unique solution for $p_{j,l}$s and $y_{n_{i,k}}$s, $i,j\in [n], l\in [r-1], k\in [r]$. Hence, there is a one-to-one mapping between $y_{[1:n]}$ and $s_{[1:n]}$.

\begin{algorithm}[H]
\centering
\caption{Proposed DSSP : Encoder}\label{alg:proposed_dssp}
\begin{enumerate}
    \item[(a)] \textbf{(Initialization)} Pick $n$ secrets such that the union of their storage sets is $[n]$.
    \item[(b)] \textbf{(Share Distribution for $\bs{n}$ secrets)} Pick distinct, non-zero $\gamma_1,\ldots,\gamma_{r}\in \mathbb{F}_q$ which are made public. For each $i\in [n]$, use encoder in Table \ref{linearsystemofeqs} to encode secret $s_i$ into shares $(y_{n_{i,1}}, \ldots, y_{n_{i,r}})$. Each share $y_i$ is stored in $i$-th storage node, where, by construction, we have $\bigcup_{i\in [n]}\{y_{n_{i,1}}, \ldots, y_{n_{i,r}}\} = \{y_1,\ldots, y_n\}$.
    \vspace{0.5em}
    \item[(c)] \textbf{(Share Distribution for remaining $\bs{m-n}$ secrets)} For each $i\in [n+1:m],$ 
    find a polynomial $P_i(x) = s_{i} + \sum_{j = 1}^{r-1}p_{i,j} x^{j}$ satisfying $P_i(\gamma_j) = y_{n_{i,j}}$ for each $j\in [r-1]$. The $r^{th}$ share for secret $s_{i}$ is $y_{n_{i,r}} = P_i(\gamma_{r})$ which is stored in $n_{i,{r}}$-th storage node.
\end{enumerate}
\end{algorithm}

The encoding of the remaining $m-n$ secrets is the same as the one proposed in \cite{SM18} ($t=1$ case) to achieve optimal storage overhead. 
The data symbols $y_{[1:n]}$ correspond to the shares of the first $n$ secrets, and $y_{[n+1:m]}$ correspond to the shares of remaining $m-n$ secrets.
%
%
The idea is to encode $n$-secrets into $n$-data symbols and then utilize them as the random seed required to encode the remaining $m-n$ secrets.
The lemma below proves that the proposed DSSP is indeed a weakly secure DSSP.
\begin{lemma}
    The proposed protocol is a weakly secure DSSP satisfying all the conditions in Definition \ref{defn:DSSP}.
\end{lemma}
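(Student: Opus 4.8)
The statement bundles two requirements from Definition \ref{defn:DSSP}: the \emph{correctness} condition (each user reconstructs all $t$ of its requested secrets) and the \emph{weak secrecy} condition \eqref{eq:weaksecrecy}. The plan is to first observe that the encoder of Algorithm \ref{alg:proposed_dssp} realizes a bijection $\mathcal{E}\colon \mathbb{F}_q^m \to \mathbb{F}_q^m$, so that the stored data symbols $Y_1,\dots,Y_m$ are jointly uniform and hence mutually independent; both correctness and secrecy then follow from the Lagrange/Shamir structure together with Lemma \ref{lemma:necessaryconditionsweaksecrecy}.

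To see that $\mathcal{E}$ is a bijection, I would treat the two blocks separately. For the first $n$ secrets, the linear system in Table \ref{linearsystemofeqs} has a unique solution by Lemma 1 of \cite{KMM21} (for a suitable nonzero choice of the $\alpha$'s and distinct nonzero $\gamma_k$'s), which is exactly the statement that $s_{[1:n]} \mapsto y_{[1:n]}$ is invertible. For each remaining secret $i \in [n+1:m]$, step (c) prescribes $P_i(\gamma_j)=y_{n_{i,j}}$ for $j\in[r-1]$ together with $P_i(0)=s_i$; since $0,\gamma_1,\dots,\gamma_{r-1}$ are $r$ distinct points and $\deg P_i \le r-1$, the polynomial $P_i$ is uniquely determined, and the new share satisfies $y_{n_{i,r}}=P_i(\gamma_r)=L_0(\gamma_r)\,s_i + \sum_{j=1}^{r-1} L_j(\gamma_r)\,y_{n_{i,j}}$ with $L_0(\gamma_r)\neq 0$. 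Thus, with $y_{[1:n]}$ fixed, $s_i \mapsto y_{n_{i,r}}$ is invertible, making the full map a bijection and the $Y_r$'s independent and uniform. Correctness is then immediate: a user requesting $S_j$ has access to $\bigcup_{i\in S_j}A_i \supseteq A_i$ for every $i\in S_j$, so it downloads all $r$ shares, i.e. all evaluations $\{P_i(\gamma_k)\}_{k=1}^r$ of each requested secret (recovered from the stored symbols and the public $\alpha$'s, $\gamma_k$'s); since $\deg P_i\le r-1$, Shamir's decoder interpolates $P_i$ and returns $s_i=P_i(0)$, supplying the decoding functions $\mathcal{D}_i$.

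For weak secrecy, fix a user $j$ and an index $\ell\in[m]\setminus S_j$. By Lemma \ref{lemma:necessaryconditionsweaksecrecy}, $A_\ell \nsubseteq \bigcup_{i\in S_j}A_i$, so there is a node $v\in A_\ell$ outside the user's access set; let $Y^\star$ denote the data symbol stored at $v$, which is one of the independent components $Y_1,\dots,Y_m$ and is absent from $U_j$. Writing $s_\ell=P_\ell(0)$ by Lagrange interpolation through its $r$ shares gives $s_\ell = c\,Y^\star + R$, where $c\neq 0$ (a product of a nonzero Lagrange coefficient $L_{k_0}(0)$ with the relevant $\alpha$) and $R$ is a linear combination of the other shares of $s_\ell$. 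Since $Y^\star$ is independent of the remaining components, while both $U_j$ and $R$ are functions of components other than $Y^\star$, conditioning on $(U_j,R)$ leaves $s_\ell=cY^\star+R$ uniform over $\mathbb{F}_q$; hence $H(s_\ell\mid U_j)=H(s_\ell)$, which is precisely \eqref{eq:weaksecrecy}.

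The step I expect to be the main obstacle is the reuse of the seed symbols $y_{[1:n]}$ across the remaining secrets: $U_j$ may contain shares of \emph{non-requested} secrets that were built from the very same seeds, so one cannot argue secrecy share-by-share in isolation. This is exactly what the bijectivity of $\mathcal{E}$ resolves, as it upgrades the informal ``the seeds act as a random pad'' to the exact statement that $Y_1,\dots,Y_m$ are independent and uniform; the single inaccessible component $Y^\star$ guaranteed by Lemma \ref{lemma:necessaryconditionsweaksecrecy} then masks $s_\ell$ completely regardless of which other shares the user observes.
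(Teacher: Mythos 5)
Your proof is correct and takes essentially the same approach as the paper: both arguments first establish that the $m$ stored symbols are uniform and mutually independent via invertibility of the encoder (Lemma 1 of \cite{KMM21} for the seed block, a nonzero Lagrange coefficient for each remaining secret), obtain correctness from Shamir's decoder, and obtain weak secrecy from the single inaccessible share of $s_\ell$, which you treat explicitly as a one-time pad and the paper handles with the equivalent entropy chain $H(s_\ell\mid U_j)\ge H(s_\ell\mid \bs{\mathrm{y}}\setminus y_\ell^{(-j)})=H(y_\ell^{(-j)})=\log q$. One attribution slip worth fixing: the non-containment $A_\ell \nsubseteq \bigcup_{i\in S_j}A_i$ holds because the protocol's storage structure is assumed to come from a $t$-disjunct matrix, not ``by Lemma \ref{lemma:necessaryconditionsweaksecrecy}'', which asserts the reverse implication (weak secrecy implies non-containment), so invoking it there would be circular.
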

\begin{proof}
    In this protocol, each user $j$ has access to all the $\sum_{i\in S_{j}}|A_i|$ evaluations of the polynomials associated with the secrets the user has requested. Hence, the correctness condition is satisfied by invoking Shamir's decoder. Now, we show that the proposed DSSP is indeed a weakly secure DSSP by showing that the condition specified in \eqref{eq:weaksecrecy} holds. 
    First, we show that the data symbols $y_1,y_2,\hdots,y_m$ generated according to the proposed protocol are uniformly distributed and mutually independent. As the vector of all secrets is assumed to be full entropy, ($s_1,s_2,\hdots,s_n$) is also full entropy. Also, under certain conditions (Lemma 1 in \cite{KMM21}), there is a one-to-one mapping between $y_{[1:n]}$ and $s_{[1:n]}$. This implies ($y_1,y_2,\hdots,y_n$) is also full entropy. Then,
    \begin{align}
        H(y_{[n+1:m]} |y_{[1:n]}) &= H(s_{[n+1:m]} |y_{[1:n]})\hspace{1.9cm} \label{eq:fullentropy1}
    \end{align}
    \begin{equation}
      \begin{minipage}{.8\linewidth}
        \hspace{2.3cm}
        \begin{eqlist}
          \item\label{eq:fullentropy2} $H(s_{[n+1:m]})$
          \item\label{eq:fullentropy3} $(m-n)\log q$,
        \end{eqlist}            
      \end{minipage}
    \end{equation}
    where (\ref{eq:fullentropy1}) holds since, given $y_{[1:n]}$ there is a one-on-one mapping between $y_{[n+1:m]}$ and $s_{[n+1:m]}$\cite{SM18},  (\ref{eq:fullentropy2}) holds since $y_{[1:n]}$ is independent of $s_{[n+1:m]}$ and (\ref{eq:fullentropy3}) holds since it is assumed that the vector of all secrets is full entropy. Using this together with the chain rule, we have
    \vspace{0.2cm}
    \begin{align}
        H(y_{[1:m]}) &= H(y_{[1:n]})+H(y_{[n+1:m]} |y_{[1:n]}) \nonumber\\
        &= n\log q + (m-n)\log q = m\log q. \label{eq:datasymbolsfullentropy}
    \end{align}
    Hence, from \eqref{eq:datasymbolsfullentropy}, we can say that the data symbols have full entropy and are mutually independent. As the storage sets are assumed to satisfy the $t$-disjunctness condition, there exists at least one $\gamma_{i}$, $i \in [r]$ such that $P_{\ell}(\gamma_{i})$ is not accessed by user $j$. Let this data symbol $P_{\ell}(\gamma_{i})$ be denoted by $y_{\ell}^{(-j)}$. Then $\forall\ j \in \left[\binom{m}{t}\right], \ell \in [m] \setminus S_{j}$
    \begin{align}
    \label{eq:entropyWS1}
        H(s_\ell|U_j) \geq H(s_\ell| \bs{\mathrm{y}}\setminus y_{\ell}^{(-j)})\hspace{3.9cm}
    \end{align}
    \begin{equation}
      \begin{minipage}{.8\linewidth}
        \hspace{0.8cm}
        \begin{eqlist}
          \item\label{eq:entropyWS2} $H(y_{\ell}^{(-j)}|\bs{\mathrm{y}}\setminus y_{\ell}^{(-j)})$
          \item\label{eq:entropyWS3} $H(y_{\ell}^{(-j)})$
          \item\label{eq:entropyWS4} $\log q.$
        \end{eqlist}           
      \end{minipage}
    \end{equation}
    where (\ref{eq:entropyWS1}) holds since conditioning does not increase the entropy, (\ref{eq:entropyWS2}) holds because given any $r-1$ evaluations of $P_{\ell}$ which is the evaluation polynomial corresponding to the $\ell$-{th} user, out of $r$ available ones, there is a one-to-one mapping between the remaining evaluation of $P_{\ell}$ and $s_{\ell}$. (\ref{eq:entropyWS3}) because data symbols are independent and (\ref{eq:entropyWS4}) because data symbols are full entropy. Also, we have
    \begin{align}
    \label{eq:entropyWS5}
          H(s_\ell|U_j) \leq H(s_\ell) = \log q,
    \end{align}
    $\forall j \in \left[\binom{m}{t}\right], \ell \in [m] \setminus S_{j}$.    From (\ref{eq:entropyWS4}) and (\ref{eq:entropyWS5}), the weak secrecy condition \eqref{eq:weaksecrecy} is satisfied.
\end{proof}

The encoding complexity of the proposed DSSP is $O(n^2 + rm)$. The encoding latency is $O(n^2)$. The decoding complexity is $O\left(rt \times \binom{m}{t}\right)$ for all users to decode their $t$ secrets, and the decoding latency is $O(rt)$. We refer the reader to \cite{SM18} for a detailed understanding of decoding and encoding complexities.

The total number of shares stored across all the storage nodes is equal to the total number of secrets, which implies that the storage overhead is 1. Hence, the proposed protocol has an optimal storage overhead. The proposed protocol also applies to cases where users request a different number of secrets. In that case, the value of $t$ is equal to the maximum over the number of secrets requested by each user.

\section{Comparison between different constructions of disjunct matrices}\label{sec:wellknownconstructions}

We first define some of the well-known constructions for $t$-disjunct matrices and then compare the number of storage nodes $n$ and communication complexity $C$ when each of these constructions is used as the storage structure in the DSSP described in Section \ref{sec:DSSPoptimalSO}. 

\subsubsection[Kautz-Singleton Construction]{Kautz-Singleton Construction \cite{KS06}}
 A $[q,k,q-k+1]_q$ Reed-Solomon code is picked as the outer code $\mathcal{C}_{out}$ while the inner code $\mathcal{C}_{in} : \mathbb{F}_q \rightarrow \{0,1\}^q$ is defined as follows. For any $i \in \mathbb{F}_{q}$, $\mathcal{C}_{in}(i) = e_{i}$, where $e_i$ is nothing but a one-hot vector. The concatenated code $\mathcal{C}^{*}= \mathcal{C}_{out} \circ \mathcal{C}_{in}$ is a $n \times m$ $t$-disjunct matrix, where $n = q^2, m = q^k,$ and $t = \lfloor\frac{q-1}{k-1}\rfloor$. Here, each column in $\mathcal{C}^*$ has $q$ ones. The set of storage sets obtained from the Kautz-Singleton construction is called the Kautz-Singleton storage structure.

\subsubsection[Porat-Rothschild Construction]{Porat-Rothschild Construction \cite{PR11}}
A linear code that meets the Gilbert-Varshamov bound is picked as the outer code. Here, $\mathcal{C}_{out}$ is $[\scriptr, k, \delta \scriptr]_{q}$ linear code,
where $\scriptr{\leq} \frac{k}{1-H_q(\delta)}$ and $t+1 = \lceil \frac{1}{1-\delta}\rceil$. The inner code $\mathcal{C}_{in} : \mathbb{F}_q \rightarrow \{0,1\}^q$ is defined as follows. For any $i \in \mathbb{F}_{q}$, $\mathcal{C}_{in}(i) = e_{i}$, where $e_i$ is nothing but a one-hot vector. The concatenated code $\mathcal{C}^{*} = \mathcal{C}_{out} \circ \mathcal{C}_{in}$ is a $n\times m$ $t$-disjunct matrix, where $n=\scriptr q, m=q^k$ and $t = \lceil \frac{1}{1-\delta}\rceil - 1$. Here, each column in $\mathcal{C}^{*}$ has $\scriptr$ ones. The set of storage sets obtained from Porat-Rothschild Construction is called the Porat-Rothschild storage structure.

\subsubsection[Sparse Disjunct Matrices]{Sparse Disjunct Matrices \cite{IKO20}}
In a $n\times m$ Sparse disjunct matrix, the number of ones in each column of a $t$-disjunct matrix is restricted to $\ell t+1$, where $\ell \geq 1$. Such a matrix can be constructed by replacing the outer code in the Kautz-Singleton construction with $[\ell t+1,k= \ell +1]$-RS code over a field of size $q = \sqrt[\ell + 1]{m}$. The concatenated code $\mathcal{C}^{*}$ is a $(\ell t+1)q\times m$ $t$-disjunct matrix. The set of storage sets obtained from constructing the Sparse Disjunct Matrix is called Sparse Disjunct storage structure.

\begin{figure*}
\centering
{\small
    \begin{tabular}{|c|c|c|c|}
    \hline
         & Kautz-Singleton & Porat-Rothschild & Sparse disjunct ($\ell = 1$) \\
    \hline
         & $\mathcal{C}_{out}: [q_1, k, q_1-k+1]_{q_1}$- RS code  & $\mathcal{C}_{out}:[\scriptr, k, \delta \scriptr]_{q_1}$-Linear code & $\mathcal{C}_{out}:[t + 1, k = 2]_{q_2}$-RS code \\
         
         Code & & where $\scriptr\leq \frac{k}{1-H_{q_1}(\delta)}$ & where $q_2=\sqrt{m}$ \\
        
         & $\mathcal{C}_{in}: I_{q_1}$ & $\mathcal{C}_{in}: I_{q_1}$ & $\mathcal{C}_{in}: I_{q_2}$ \\
    \hline
        Disjunctness & $t = \lfloor \frac{q_1-1}{k-1} \rfloor$ & $t+1 = \lceil \frac{1}{1- \delta} \rceil $ & $t+1\leq \sqrt{m}$ \\
    \hline
        $t$-disjunct Matrix & $\mathcal{C}^*:q_1^2\times m$ & $\mathcal{C}^*:\scriptr q_1 \times m$ & $\mathcal{C}^*: (t+1)q_2\times m$ \\
         & $\approx t^2 \log_{t}^2m \times m$ & $\approx t^2 \log m \times m$ & $= (t+1)\sqrt{m}\times m$ \\
    \hline
     Column weight & $q_1\approx t \log_{t}m$ & $\scriptr$ & $t+1$ \\
     \hline 
     Storage Overhead & 1 & 1 & 1 \\
     \hline
     Comm. Complexity & $\binom{m - 1}{t-1} \times q_1 \times m$& $\binom{m - 1}{t-1} \times \scriptr \times m$ & $\binom{m-1}{t-1} \times (t + 1) \times m$ \\ 
     \hline
    \end{tabular}
    \captionof{table}{Comparison of disjunct matrix constructions.}\label{compare3}
    }
\end{figure*}
For a given total number of secrets $m$, and the number of secrets $t$ each user requests $(t<<m)$, we compare the number of storage nodes $n$ and the communication complexity $C$ across the above mentioned constructions for $t$-disjunct matrices (see also Table \ref{compare3}).

\subsubsection*{Kautz-Singleton\,(KS) Vs Porat-Rothschild\,(PR)}
There are two regimes considered in the literature: (i) $t = O(poly(\log m))$, and (ii) $t=O(m^\alpha),\ \alpha\in (0,1/2)$. In the regime (i), we have $n_{PR}<n_{KS}$ and $C_{PR}<C_{KS}$, which shows that PR is better than KS. Whereas in the regime (ii), both the inequalities are reversed, which shows that KS is better than PR.


\subsubsection*{Kautz-Singleton\,(KS) Vs Sparse Disjunct ($\ell = 1$)\,(SD)}
Since we have $t<< m$, $n_{KS}<n_{SD}$ and $C_{KS}>C_{SD}$. There is a tradeoff between these storage structures, so if for example one seeks to minimize the number of storage nodes, KS is better than SD while paying for a higher communication complexity and vice versa.

\subsubsection*{Porat-Rothschild\,(PR) vs Sparse disjunct ($\ell =1$)\,(SD)}
Since we have $t<< m$, $n_{PR}<n_{SD}$ and $C_{PR}<C_{SD}$. Thus, PR is better than SD.



Another interesting construction of the $t$-disjunct matrix uses the Steiner system, defined below.

\begin{definition}
Let $X$ be an $n$-element set. A Steiner system $\mathfrak{S}(n,b,p)$ is defined as $\mathfrak{S}\subset \binom{X}{b}$ such that for every $A \in \binom{X}{p}$ there is exactly one $B\in \mathfrak{S}$ with $A \subset B$, where $\binom{X}{i}$ here denotes the collection of all the $i$-sized subsets of $X$. The largest set $\mathfrak{S}$ which satisfies this property is called the \textit{maximum} Steiner system.
\end{definition}

In \cite{steinerconjecture}, it is proved that $\mathfrak{S}(n,b,p)$ Steiner system gives a $\lfloor \tfrac{b-1}{p-1}\rfloor$ disjunt matrix. A relation between the Steiner system and constant column weight $t$-disjunct matrices was conjectured, as stated below.

\begin{conjecture}{\cite{steinerconjecture}}\label{conjecture:steinersystem}
    Let $M$ be a $n\times m$ $t$-disjunct matrix with constant column weight $b$. Let $p = \tfrac{b+t-1}{t}$ (we assume that $p$ is an integer). 
    The maximum $\mathfrak{S}(n,b,p)$ Steiner system gives a matrix $M^\prime$ that is no worse than $M$. In other words, $|\mathfrak{S}(n,b,p)|\ge m$.
\end{conjecture}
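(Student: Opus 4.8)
The plan is to translate the conjecture into a purely extremal set-theoretic comparison and then attack the nontrivial inequality. First I would record that, since $t(p-1)=b-1$, a family $\mathfrak{S}\subseteq\binom{X}{b}$ is a (partial) Steiner system $\mathfrak{S}(n,b,p)$ exactly when its members pairwise intersect in at most $p-1$ points; equivalently, no $p$-subset of $X$ lies in two blocks. Writing $D(n,b,p)$ for the size of the maximum such packing, the claim $|\mathfrak{S}(n,b,p)|\ge m$ is the assertion $D(n,b,p)\ge m$ for every $n\times m$, column-weight-$b$, $t$-disjunct matrix. One direction is immediate from the result of \cite{steinerconjecture} quoted above: any packing with pairwise intersections $\le p-1$ is $t$-disjunct, since covering a block of size $b$ by $t$ others would require covering $t(p-1)=b-1<b$ of its points. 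Hence the maximum $t$-disjunct family is at least as large as the maximum packing, and the whole content of the conjecture is the reverse inequality, namely that enlarging the admissible pairwise intersections beyond $p-1$, while retaining $t$-disjunctness, cannot increase the number of columns.

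To bound $m$ from above, I would double count $p$-subsets. Each column contributes $\binom{b}{p}$ of them, so if $\mu(P)$ denotes the number of columns whose support contains the $p$-set $P$, then $\sum_{P}\mu(P)=m\binom{b}{p}$. If the matrix were a packing we would have $\mu(P)\le 1$ and hence $m\le\binom{n}{p}/\binom{b}{p}$, matching the size of a perfect Steiner system. When the matrix is merely $t$-disjunct this fails pointwise, so I would use disjunctness in its local form: fixing a column $B_0$, the traces $B_i\cap B_0$ of the remaining columns are subsets of the $b$-element set $B_0$, and $t$-disjunctness says that no $t$ of them cover $B_0$. The plan is to turn this ``no-$t$-cover'' restriction on the traces into a bound on the total co-degree $\sum_{P\subseteq B_0}\mu(P)$, and then to run an exchange (compression) argument: whenever two columns share at least $p$ points, replace one of them by a $b$-set meeting the family in fewer points, chosen so as to strictly decrease the potential $\Phi=\sum_P\binom{\mu(P)}{2}$ while preserving both the column weight and $t$-disjunctness. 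Driving $\Phi$ to $0$ would produce a packing of the same size $m$, yielding $m\le D(n,b,p)$.

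The hard part will be the exchange step, and this is exactly where the statement resists a proof. Since $t$-disjunctness is a global condition on $(t+1)$-tuples of columns, a local replacement that lowers one pairwise intersection can simultaneously create a new column that is covered by $t$ others; ensuring a monotone decrease of $\Phi$ without ever violating $t$-disjunctness is not guaranteed by any obvious shifting operator, because the standard compressions of extremal set theory need not preserve cover-freeness. A secondary difficulty is that when no perfect $\mathfrak{S}(n,b,p)$ exists one must compare $m$ against the true packing number $D(n,b,p)$ rather than the divisibility bound $\binom{n}{p}/\binom{b}{p}$, so even a clean counting bound on $m$ would not by itself close the gap. For these reasons I would keep in reserve a dual approach---an LP or eigenvalue (Delsarte/Lov\'asz-type) relaxation of the $t$-disjunct constraint whose optimum provably coincides with the packing number---as the most likely route should the combinatorial exchange fail to preserve disjunctness.
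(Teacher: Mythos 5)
Your proposal is not a proof, and the decisive step is missing --- as you yourself acknowledge. The reduction to packings (families with pairwise intersections at most $p-1$), the double count of $p$-subsets, and the observation that any such packing is automatically $t$-disjunct (since $t$ blocks can cover at most $t(p-1)=b-1$ points of another block) are all sound, but they only establish the easy inclusion: the maximum packing is itself an admissible column family. The entire content of the statement is the extremal inequality $m\le D(n,b,p)$, and your route to it --- a compression argument driving $\Phi=\sum_{P}\binom{\mu(P)}{2}$ to zero while preserving column weight and $t$-disjunctness --- is precisely the step you cannot justify: $t$-disjunctness is a global condition on $(t+1)$-tuples of columns, no standard shifting operator is known to preserve it, and you offer no candidate exchange whose existence is proved. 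The fallback LP/Delsarte relaxation is likewise only named, not exhibited; nothing shows its optimum equals the packing number. So the attempt stalls exactly at the open question, and no amount of reorganizing the counting will close it.

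For context, this is consistent with how the paper treats the statement: it is presented as a conjecture from the cited literature, and the paper does not prove it in general. What the paper does prove are two special cases, where $M$ arises from the Kautz--Singleton or Sparse Disjunct constructions. In those cases $m$ is explicit ($q^k$ or $q^2$), and, assuming the relevant Steiner system exists, its size has the closed form $\binom{n}{p}\big/\binom{b}{p}$, so the inequality $m\le|\mathfrak{S}|$ reduces to an elementary binomial estimate such as $\tfrac{q^2-\ell}{q-\ell}>q$ for $\ell\in[1,q)$. If you want a result you can actually complete, that is the realistic target: fix a concrete construction for $M$, use the explicit packing size, and verify the inequality by direct counting, rather than attempting the general extremal claim, which remains open.
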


\begin{remark}
    If Conjecture \ref{conjecture:steinersystem} is true, then for a given $n$, the number of storage nodes, and $b$, the size of storage sets, the storage structure obtained from Steiner system $\mathfrak{S}(n,b,p)$ is the best in terms of accommodating more number of secrets.
\end{remark}


We prove this conjecture for the special cases where the matrix $M$ is obtained using the Kautz-Singleton and Sparse Disjunct constructions.

Consider a $q^2\times q^k$ $t$-disjunct matrix obtained from Kautz-Singleton construction, $t = \lfloor \frac{q-1}{k-1}\rfloor$. We set $k = \frac{q+t-1}{t}$ to accommodate most secrets in this construction. Then the corresponding Steiner system with the same number of storage nodes is given by $\mathfrak{S}(q^2,q,k)$. The following lemma compares the total number of secrets accommodated by both constructions. 

\begin{lemma}
    If the Steiner system $\mathfrak{S}(q^2,q,p=\tfrac{q+t-1}{t})$ exists, then it can accommodate more secrets compared to the storage structure obtained from the Kautz-Singleton construction with the same number of nodes $q^2$ and constant column weight $q$. In other words,
    \begin{align}
        q^{s} &\leq \left|\mathfrak{S}(q^2,q,p)\right|
        =  \binom{q^2}{s}\Big/\binom{q}{s}.\label{eqn:steiner_ks}
    \end{align}
\end{lemma}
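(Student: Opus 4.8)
The plan is to establish the closed form for the Steiner system size by a double-counting argument, and then reduce the inequality to a factor-by-factor comparison of the two binomial coefficients. Throughout, I would keep in mind that the exponent $s$ coincides with the third parameter $p=\tfrac{q+t-1}{t}$ of the Steiner system, which in turn equals the Kautz--Singleton exponent $k$, so that $q^{s}=q^{k}$ is exactly the number of secrets accommodated by the Kautz--Singleton construction. Hence proving $q^{s}\le|\mathfrak{S}(q^2,q,p)|$ is precisely the statement that the Steiner system accommodates at least as many secrets.

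First I would recall why $|\mathfrak{S}(n,b,p)|=\binom{n}{p}/\binom{b}{p}$ in general: by the defining property every $p$-subset of the ground set lies in exactly one block, while every block, being of size $b$, contains exactly $\binom{b}{p}$ distinct $p$-subsets. Double counting the incidences between blocks and the $p$-subsets they contain gives $|\mathfrak{S}|\cdot\binom{b}{p}=\binom{n}{p}$. Specializing to $n=q^2$, $b=q$ yields the equality asserted in \eqref{eqn:steiner_ks}, so it remains only to prove $q^{s}\le\binom{q^2}{s}/\binom{q}{s}$.

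Next I would rewrite the ratio as a product, $\binom{q^2}{s}/\binom{q}{s}=\prod_{i=0}^{s-1}\frac{q^2-i}{q-i}$, which reduces the claim to showing that each of the $s$ factors is at least $q$. Here one must first check that the product is well defined: since $p=\tfrac{q+t-1}{t}=\tfrac{q-1}{t}+1\le q$ for every integer $t\ge 1$, we have $s-1<q$, so each denominator $q-i$ with $0\le i\le s-1$ is strictly positive. The core step is then the elementary per-factor bound $\frac{q^2-i}{q-i}\ge q$; clearing the positive denominator, this is equivalent to $q^2-i\ge q^2-qi$, i.e.\ to $i(q-1)\ge 0$, which holds for all $i\ge 0$ and $q\ge 1$. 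Multiplying the $s$ inequalities together gives $\binom{q^2}{s}/\binom{q}{s}=\prod_{i=0}^{s-1}\frac{q^2-i}{q-i}\ge q^{s}$, which is exactly \eqref{eqn:steiner_ks}.

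I do not anticipate a genuine obstacle, since the argument is elementary once the product representation is in hand; the only point that requires care is the feasibility check $p\le q$, without which some factors would have nonpositive denominators and the inequality chain would be meaningless. Matching the exponent $s$ to the Kautz--Singleton parameter $k=p$ then completes the interpretation that the Steiner storage structure accommodates at least as many secrets as the Kautz--Singleton one on the same $q^2$ nodes with column weight $q$.
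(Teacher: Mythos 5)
Your proof is correct and takes essentially the same route as the paper's, which likewise expands $\binom{q^2}{s}\big/\binom{q}{s}$ as the product $\prod_{i=0}^{s-1}\tfrac{q^2-i}{q-i}$ and observes that each factor is at least $q$. The additional details you supply --- the double-counting derivation of $\left|\mathfrak{S}(q^2,q,p)\right|=\binom{q^2}{p}\big/\binom{q}{p}$ and the feasibility check $p\le q$ ensuring positive denominators --- are correct elaborations of steps the paper leaves implicit.
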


\begin{proof}
    Expanding the binomial coefficients on RHS of \eqref{eqn:steiner_ks} and observing that for all $\ell\in [1,q),$ $\tfrac{q^2-\ell}{q-\ell}>q$ gives \eqref{eqn:steiner_ks}.
\end{proof}

Similarly, a comparison between a sparse disjunct matrix and the Steiner system is given below.

\begin{lemma}
    If the Steiner system $\mathfrak{S}((t+1)q,t+1,2)$ exists, then it can accommodate more secrets compared to the storage structure obtained from the Sparse Disjunct matrix with the same number of nodes $(t+1)q$ and constant column weight $t+1$. In other words,
    \begin{align*}
        q^{2} &\leq \left|\mathfrak{S}((t+1)q,t+1,2)\right|
        =  \binom{(t+1)q}{2}\Big/\binom{t+1}{2}.
    \end{align*}
\end{lemma}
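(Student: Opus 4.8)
The plan is to mirror the template of the preceding Kautz--Singleton lemma: first pin down the exact cardinality of the Steiner system, and then reduce the claimed inequality to an elementary algebraic fact. First I would recall the standard double-counting identity for a Steiner system $\mathfrak{S}(n,b,p)$. Every block $B\in\mathfrak{S}$ is a $b$-subset of $X$ and therefore contains exactly $\binom{b}{p}$ distinct $p$-subsets; on the other hand, the defining property guarantees that each of the $\binom{n}{p}$ $p$-subsets of $X$ is contained in exactly one block. Counting incidences between $p$-subsets and blocks in two ways yields $|\mathfrak{S}(n,b,p)|\cdot\binom{b}{p}=\binom{n}{p}$, hence $|\mathfrak{S}(n,b,p)|=\binom{n}{p}\big/\binom{b}{p}$. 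Specializing to $n=(t+1)q$, $b=t+1$, and $p=2$ reproduces exactly the right-hand side of the stated equation, so this step disposes of the equality.

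Next I would verify the inequality $q^{2}\le \binom{(t+1)q}{2}\big/\binom{t+1}{2}$. Expanding the binomials gives $\binom{(t+1)q}{2}\big/\binom{t+1}{2}=\frac{(t+1)q\,\bigl((t+1)q-1\bigr)}{(t+1)\,t}=q\cdot\frac{(t+1)q-1}{t}$, so the desired bound is equivalent to $\frac{(t+1)q-1}{t}\ge q$, i.e.\ $(t+1)q-1\ge tq$, i.e.\ $q\ge 1$, which holds for every admissible field size. To keep the presentation parallel to the Kautz--Singleton proof, one may instead write the ratio as the product $\prod_{\ell=0}^{1}\frac{(t+1)q-\ell}{(t+1)-\ell}$, whose $\ell=0$ factor equals $q$ and whose $\ell=1$ factor $\frac{(t+1)q-1}{t}$ exceeds $q$ whenever $q>1$; the product of these two factors is therefore at least $q\cdot q=q^{2}$.

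Since the argument is a direct computation, there is no genuine obstacle here. The only point requiring a word of care is the cardinality formula: it presumes that such a Steiner system actually realizes ``exactly one block per $p$-subset,'' which is precisely the existence hypothesis in the statement (existence imposes the usual divisibility constraints on $n,b,p$). Under that hypothesis the formula holds with equality, and no further estimate is needed; everything reduces to the trivial inequality $q\ge 1$.
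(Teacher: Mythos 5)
Your proof is correct and takes essentially the same approach as the paper: the paper proves the analogous Kautz--Singleton lemma by expanding the binomial coefficients factor by factor and states this sparse-disjunct case without proof as ``similar,'' which is precisely the computation you carry out (the ratio equals $q\cdot\frac{(t+1)q-1}{t}\ge q^{2}$ since $q\ge 1$). Your extra double-counting justification of the cardinality formula $\left|\mathfrak{S}(n,b,p)\right|=\binom{n}{p}\big/\binom{b}{p}$ is a standard fact the paper takes for granted, so it adds rigor without changing the argument.
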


\subsubsection*{\textbf{Balanced Storage Profile}} In large-scale distributed storage systems, it is essential to distribute the data evenly across the nodes and ensure each node has a similar amount of data to manage. This helps in avoiding problems like slower access times and system failures. We say that a collection $\mathcal{F}$ of subsets of $[n]$ is a \textit{balanced collection} if each $i \in [n]$ belongs to the same number of subsets in $\mathcal{F}$. The Kautz-Singleton construction and the Sparse disjunct matrices defined above provide a $t$-disjunct matrix with constant row and column weights. Thus, the resulting storage structures are balanced collections and can be used to obtain a DSSP with a balanced storage profile.

\section{Bounds on Optimal Communication Complexity}\label{sec:commcomplexity}
In this section, we derive bounds on the minimum communication complexity of DSSPs where each user requests a subset of $t$ secrets. Similar to \cite{SM18}, we also use the \textit{tight} DSSPs in deriving these bounds. The DSSP which attains the lower bound of the minimum communication complexity with equality is called \textit{communication-optimal} DSSP.

\begin{definition}
    A DSSP is said to be tight DSSP (T-DSSP) if every user downloads exactly one $\mathbb{F}_q$-symbol from each node in the storage set corresponding to each secret in his designated set of secrets.
\end{definition}

Let $b_k$ denote the number of $t$ subsets whose union is of size $k$ in the storage structure of a T-DSSP. Then its communication complexity lies between:
\vspace{-0.1cm}
\begin{equation}
\label{eq:commcomplexity}
    \sum_{k=t}^{n}kb_{k} \leq C < t\sum_{k=t}^{n}kb_{k},
\end{equation}
where we obtain the lower and upper bounds when each user downloads exactly one share and $t$ shares, respectively, from each node, the user has access to.

In \cite{SM18}, it is proved that for every DSSP with communication complexity $C$, there exists a T-DSSP with the same number of storage nodes and users with communication complexity $\tilde{C} \leq C$. Therefore, we can minimize \eqref{eq:commcomplexity} to find communication-optimal DSSP, provided that storage structure satisfying Lemma \ref{lemma:necessaryconditionsweaksecrecy}, with such $b_{k}s$ exists. 

We derive a necessary condition for the storage structure satisfying Lemma \ref{lemma:necessaryconditionsweaksecrecy} to exist. 
\begin{lemma}\label{lemma:opt_CC_bound}
    Consider a storage structure $\mathcal{A}$ satisfying Lemma \ref{lemma:necessaryconditionsweaksecrecy}, then
    $\sum_{k=t}^n b_k/{\binom{n}{k}}\leq 1$.
\end{lemma}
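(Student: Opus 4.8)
The plan is to recognize this as an instance of the Lubell--Yamamoto--Meshalkin (LYM) inequality. For each $t$-subset $T\in\binom{[m]}{t}$, let $U(T)=\bigcup_{i\in T}A_i\subseteq[n]$ denote the union of the corresponding storage sets, so that by definition $b_k$ counts the subsets $T$ with $|U(T)|=k$. I would first show that the map $T\mapsto U(T)$ is injective and that its image $\mathcal{U}=\{U(T):T\in\binom{[m]}{t}\}$ is an antichain in the Boolean lattice $2^{[n]}$; the claimed inequality then drops out of LYM applied to $\mathcal{U}$.

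The key step is the following containment claim: if $U(T_1)\subseteq U(T_2)$ for two $t$-subsets $T_1,T_2$, then $T_1=T_2$. To see this, suppose $T_1\neq T_2$. Since $|T_1|=|T_2|=t$, there is an index $i\in T_1\setminus T_2$. Then $A_i\subseteq U(T_1)\subseteq U(T_2)=\bigcup_{j\in T_2}A_j$, which exhibits $A_i$ as contained in the union of $t$ storage sets whose indices are all distinct from $i$ (as $i\notin T_2$). This directly contradicts Lemma \ref{lemma:necessaryconditionsweaksecrecy}, so $T_1=T_2$. The claim simultaneously yields injectivity (take $U(T_1)=U(T_2)$, so each inclusion holds) and the antichain property (no two distinct members of $\mathcal{U}$ are nested).

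With these two facts in hand, injectivity gives $b_k=|\{U\in\mathcal{U}:|U|=k\}|$, whence
\begin{equation*}
\sum_{k=t}^{n}\frac{b_k}{\binom{n}{k}}=\sum_{U\in\mathcal{U}}\frac{1}{\binom{n}{|U|}}\le 1,
\end{equation*}
where the last inequality is exactly LYM applied to the antichain $\mathcal{U}\subseteq 2^{[n]}$. The summation index may be taken to start at $t$ because a short induction using Lemma \ref{lemma:necessaryconditionsweaksecrecy} shows that each additional storage set enlarging a union of fewer than $t$ sets must contribute at least one new node (else pad with fresh indices to obtain $t$ sets in the union and contradict disjunctness), so $|U(T)|\ge t$ for every $t$-subset $T$.

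I expect the main obstacle to be the containment claim rather than the LYM step: one has to observe that $t$-disjunctness forbids not merely a single extra column from lying inside a union of $t$ columns, but --- via the $i\in T_1\setminus T_2$ argument --- also forbids one entire union $U(T_1)$ from sitting inside another union $U(T_2)$. Once this antichain structure is identified, the bound is an immediate consequence of the standard LYM inequality, and no further estimation is required.
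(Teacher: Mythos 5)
Your proof is correct, and it rests on the same underlying idea as the paper's --- the paper itself describes its bound as ``a generalization of the LYM inequality'' --- but the execution differs in a way worth noting. The paper re-runs Lubell's permutation double-counting argument directly in this setting: each $t$-tuple of storage sets whose union has size $k$ is matched to the $k!(n-k)!$ permutations of $[n]$ whose first $k$ values exhaust that union, no permutation is matched twice, and so $\sum_{k=t}^{n} b_k\, k!(n-k)! \le n!$. You instead verify that the unions $U(T)$ form an antichain and that $T\mapsto U(T)$ is injective, then invoke the standard LYM inequality as a black box. Your route is cleaner and in fact more complete on two points the paper glosses over. First, you derive the containment claim ($U(T_1)\subseteq U(T_2)$ forces $T_1=T_2$) explicitly from Lemma \ref{lemma:necessaryconditionsweaksecrecy} via an index $i\in T_1\setminus T_2$, whereas the paper only asserts ``one union would be a subset of the other'' without spelling out why disjunctness makes this impossible. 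Second, and more substantively, you isolate the injectivity of $T\mapsto U(T)$ as a needed step: since $b_k$ counts $t$-subsets rather than distinct unions, a repeated union would cause the permutations (equivalently, the LYM weights $1/\binom{n}{|U|}$) to be double-counted, and the paper's inequality silently relies on this same fact. Your closing argument that $|U(T)|\ge t$ is what justifies the exact identity between your sum over $k\ge t$ and the sum over $\mathcal{U}$; strictly it is not needed for the inequality itself, since dropping any terms with $k<t$ only decreases the left-hand side, but including it makes the accounting airtight.
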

\begin{proof}
The permutations of $[n]$ can be counted in two different ways using the double counting argument. One is by counting all permutations of $[n]$ identified with $\{1,\hdots, n \}$ directly, and there are $n!$ of them, and the other by generating a permutation of the $[n]$ by selecting sets $(S_{i_1},\hdots, S_{i_t}),$ each $S_{i_j}\in \mathcal{A}$ and choosing a map that sends $\{1,\hdots, |\cup_{j\in [t]} S_{i_j}|\}$ to $\cup_{j\in [t]} S_{i_j}$.

If $|\cup_{j\in [t]} S_{i_j}| = k$, the sets $(S_{i_1},\hdots, S_{i_t})$ are associated in this way with $k!(n - k)!$ permutations, and in each of them the image of first $k$ elements of $[n]$ is exactly $\cup_{j\in [t]} S_{i_j}$. Each permutation may only be associated with a single $\cup_{j\in [t]} S_{i_j}$. If a permutation is associated with $(S_{i_1},\hdots, S_{i_t})$ and $(S_{i^\prime_1},\hdots, S_{i^\prime_t})$, then one union would be a subset of the other. The number of permutations that this procedure can generate is less than or equal to $n!$, i.e.,
\begin{align*}
    \sum_{\substack{S_{i_j}\in A\\\forall j\in[t]}} |\cup_{j\in [t]} S_{i_j}|(1-|\cup_{j\in [t]} S_{i_j}|)=\sum_{k=t}^n b_k k!(n-k)! \leq n!.
\end{align*}
Dividing the above inequality by $n!$ gives the result. This is a generalization of the LYM inequality\cite{Lubell1966ASP}.
\end{proof}
From \eqref{eq:commcomplexity} and Lemma \ref{lemma:opt_CC_bound}, we consider the following discrete optimization problem to derive the bounds on optimal communication complexity.
\vspace{-0.1in}
\begin{align}
    \label{eq:objectivefn}
    min \quad &\sum_{k=t}^{n}kb_{k} \\
    \label{eq:constraint1}
    \textrm{s.t.} \quad &\forall k \in [t:n]: b_{k} \in \mathbb{N} \cup \{0\}\\
    \label{eq:constraint2}
         \quad &\sum_{k=t}^{n} b_{k} = {\binom{m}{t}} \\
    \label{eq:constraint3}
         \quad &\sum_{k=t}^{n} \frac{b_k}{\binom{n}{k}} \leq 1.
\end{align}
Constraint (\ref{eq:constraint1}) is set to ensure $b_{k}s$ are non-negative. Constraint (\ref{eq:constraint2}) is set because the sum of $b_{k}s$ is equal to the number of users $\binom{m}{t}$ and constraint (\ref{eq:constraint3}) is a necessary condition for storage structure satisfying Lemma \ref{lemma:necessaryconditionsweaksecrecy} to exist.

The solution to this constrained optimization problem follows the same lines as the one given in \cite{SM18}. Let $\beta_{k}^*$s be the solutions to the corresponding continuous optimization problem of the one given in (\ref{eq:objectivefn}), then at most two of the $\beta_{k}^*$s can be non-zero. Furthermore, if two are non-zero, their indices are consecutive. Let $i$ denote the largest integer such that $\binom{n}{i}\leq \binom{m}{t}$. Then,
\begin{align*}
    \beta^{*}_i = \frac{\binom{n}{i+1}-\binom{m}{t}}{\binom{n}{i+1}-\binom{n}{i}}\binom{n}{i}, \quad
    \beta^{*}_{i+1} = \frac{\binom{m}{t}-\binom{n}{i}}{\binom{n}{i+1}-\binom{n}{i}}\binom{n}{i+1}.
\end{align*}
For a given number of secrets $m$ and number of storage nodes $n$, any T-DSSP with a storage structure $\mathcal{A}$ that has $\lfloor \beta_i^{*} \rfloor$ $t$ subsets whose union is of size $i$ and $\lceil \beta_{i+1}^* \rceil$ $t$ subsets whose union is of size $i+1$ is a communication-optimal DSSP. We leave it as future work to solve for the exact value of optimal communication complexity.

\section{Capacity region of Distributed Multi-User Secret Sharing}
\label{sec:capacity}
In \cite{KMM21}, the capacity region of the distributed multi-user secret sharing system is characterized, subject to correctness and secrecy constraints. They consider a DMUSS system that consists of a dealer, $n\in \mathbb{N}$ storage nodes, and $m\in \mathbb{N}$ users. In the system set-up considered in \cite{SM18}, the storage structure has a regularized form, and the user's secret messages have equal size, whereas the DMUSS set-up in \cite{KMM21} considers an arbitrary storage structure, and the users can have different message sizes. Now, we recall a few definitions given in \cite{KMM21}:
\begin{definition}
    Let the length of secret $s_j$ be denoted by $r_j$; we define its secret message rate as $R_j = \frac{r_j}{K}$,
    where $K$ is the size of each storage node.
\end{definition}
\begin{definition}
    The \textit{capacity region} of a DMUSS is defined as a set of all achievable rate tuples, subject to the correctness and secrecy constraints.
\end{definition}
The capacity region of the distributed multi-user secret sharing system is characterized under weak secrecy condition \eqref{eq:weaksecrecy} as follows:
\begin{theorem}
\label{lemma:DMUMSS}
    The capacity region of DMUSS is the convex hull of all regions with the rate tuple $(R_1, R_2,\ldots,R_m)$ satisfying:
    \begin{gather}
        R_j \leq \min|A_j \setminus \cup_{\tilde{j} \in S}A_{\tilde{j}}|,\ \forall\  S \subseteq [m]\setminus\{j\}, |S|=t \\
        \sum_{i \in S} R_i \leq |\cup_{i \in S} A_i|,\quad S \subseteq [m]
    \end{gather}
\end{theorem}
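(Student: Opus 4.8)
The statement is a capacity characterization, so the plan is to establish it in the two usual directions: a converse showing every achievable rate tuple obeys the two families of inequalities, and an achievability argument showing every tuple in the polytope they cut out is achievable (with time sharing accounting for the convex hull). Throughout I would write $W_i$ for the $K$-symbol content of storage node $i$, so that $H(W_i)\le K\log q$, and normalise entropies by $K\log q$ so that $H(s_j)/(K\log q)=R_j$.

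For the converse I would treat the two inequality families separately, using only the correctness condition of Definition~\ref{defn:DSSP} (which gives $H(s_j\mid W_{A_j})=0$) and the weak-secrecy condition \eqref{eq:weaksecrecy}. Fix $j$ and a set $S\subseteq[m]\setminus\{j\}$ with $|S|=t$, and put $T=\bigcup_{\tilde j\in S}A_{\tilde j}$. The user requesting $S$ sees exactly $W_T$, so \eqref{eq:weaksecrecy} gives $H(s_j\mid W_T)=H(s_j)$; since $A_j\cap T\subseteq T$, dropping conditioning yields $H(s_j\mid W_{A_j\cap T})=H(s_j)$, i.e. $s_j$ is independent of $W_{A_j\cap T}$. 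Combining this with correctness, $H(s_j)=H(s_j\mid W_{A_j\cap T})-H(s_j\mid W_{A_j})=I(s_j;W_{A_j\setminus T}\mid W_{A_j\cap T})\le H(W_{A_j\setminus T})\le |A_j\setminus T|\,K\log q$, and dividing by $K\log q$ and minimising over $S$ gives the first inequality. The second is even more direct: for any $S\subseteq[m]$ all secrets $\{s_i:i\in S\}$ are functions of $W_{\bigcup_{i\in S}A_i}$, so by mutual independence of the secrets $\sum_{i\in S}R_i\,K\log q=H(s_S)=I(s_S;W_{\bigcup_{i\in S}A_i})\le |\bigcup_{i\in S}A_i|\,K\log q$.

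For achievability I would mirror the linear/Shamir construction of Section~\ref{section:DSSPoptimalSO} and of \cite{KMM21}, now with the adversarial set being the $t$-union $T=\bigcup_{\tilde j\in S}A_{\tilde j}$ rather than a single $A_j$. After reducing to rational rate points (scale $K$ so each $r_j=R_jK$ is an integer), I would let each node's content be a fixed $\mathbb{F}_q$-linear combination of the secret symbols $\{s_{j,a}\}$ and of independent uniform key symbols, then argue that the coefficients can be chosen so that (i) for each $j$ the restriction to $W_{A_j}$ has enough rank to recover $s_j$ (decodability, underwritten by the cut constraints $\sum_{i\in S}R_i\le|\bigcup_{i\in S}A_i|$), and (ii) for each $j$ and each size-$t$ set $S\not\ni j$ the keys residing on the $|A_j\setminus T|$ hidden nodes perfectly mask $s_j$, so $W_T$ is independent of $s_j$ (weak secrecy, underwritten by $R_j\le|A_j\setminus T|$). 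A generic-coefficient or random-coding argument over a large enough field, together with a union bound over the finitely many decoding and secrecy events, should produce a single code meeting all constraints; convexity and time sharing then fill out the convex hull.

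The routine part is the converse, which is just the entropy manipulation above. The main obstacle is the achievability: one must exhibit a \emph{single} linear code that simultaneously satisfies all $m$ decoding requirements and all $\binom{m}{t}(m-t)$ secrecy requirements, i.e. reconcile correctness with weak secrecy for every requester set at once. I expect the crux to be verifying that the rank (decodability) conditions and the independence (masking) conditions are compatible precisely when the polytope inequalities hold, which is where the tightness of the characterization is really decided; making the generic-position argument precise (choosing $q$ and the coefficients, or invoking an MDS/random-coding construction as in \cite{KMM21}) is the technical heart of the proof.
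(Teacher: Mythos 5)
Your proposal takes essentially the same route the paper intends: the paper gives no proof details at all, stating only that both directions ``follow along the same lines as in \cite{KMM21},'' and your argument --- an entropy converse that applies correctness ($H(s_j\mid W_{A_j})=0$) together with weak secrecy against each $t$-union $T=\bigcup_{\tilde j\in S}A_{\tilde j}$, plus a \cite{KMM21}-style linear/generic-coefficient achievability --- is precisely that generalization from $t=1$ to general $t$. Your converse is complete and correct; your achievability, like the paper's, is left as a sketch deferring to the \cite{KMM21} construction, so the two are on the same footing there.
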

The capacity region of DMUSS, characterized in \cite{KMM21}, is a special case of Theorem \ref{lemma:DMUMSS} where $t=1$. The achievability and converse proofs for the above theorem follow along the same lines as in \cite{KMM21}.


\bibliographystyle{IEEEtran} 
\bibliography{Bibliography} 

\begin{thebibliography}{10}
\providecommand{\url}[1]{#1}
\csname url@samestyle\endcsname
\providecommand{\newblock}{\relax}
\providecommand{\bibinfo}[2]{#2}
\providecommand{\BIBentrySTDinterwordspacing}{\spaceskip=0pt\relax}
\providecommand{\BIBentryALTinterwordstretchfactor}{4}
\providecommand{\BIBentryALTinterwordspacing}{\spaceskip=\fontdimen2\font plus
\BIBentryALTinterwordstretchfactor\fontdimen3\font minus \fontdimen4\font\relax}
\providecommand{\BIBforeignlanguage}[2]{{%
\expandafter\ifx\csname l@#1\endcsname\relax
\typeout{** WARNING: IEEEtran.bst: No hyphenation pattern has been}%
\typeout{** loaded for the language `#1'. Using the pattern for}%
\typeout{** the default language instead.}%
\else
\language=\csname l@#1\endcsname
\fi
#2}}
\providecommand{\BIBdecl}{\relax}
\BIBdecl

\bibitem{AS79}
A.~Shamir, ``How to share a secret,'' \emph{Commun. ACM}, vol.~22, no.~11, p. 612–613, 1979.

\bibitem{Blakley1899}
G.~R. Blakley, ``Safeguarding cryptographic keys,'' \emph{1979 International Workshop on Managing Requirements Knowledge (MARK)}, pp. 313--318, 1899.

\bibitem{cramer2000general}
R.~Cramer, I.~Damg{\aa}rd, and U.~Maurer, ``General secure multi-party computation from any linear secret-sharing scheme,'' \emph{Advances in Cryptology—EUROCRYPT 2000}, pp. 316--334, 2000.

\bibitem{takahashi2013secret}
S.~Takahashi and K.~Iwamura, ``Secret sharing scheme suitable for cloud computing,'' \emph{2013 IEEE 27th International Conference on Advanced Information Networking and Applications (AINA)}, pp. 530--537, 2013.

\bibitem{liu2019voting}
Y.~Liu and Q.~Zhao, ``E-voting scheme using secret sharing and k-anonymity,'' \emph{World Wide Web}, vol.~22, pp. 1657--1667, 2019.

\bibitem{raman2018distributed}
R.~K. Raman and L.~R. Varshney, ``Distributed storage meets secret sharing on the blockchain,'' \emph{2018 information theory and applications workshop (ITA)}, pp. 1--6, 2018.

\bibitem{baccarini2020multi}
A.~Baccarini, M.~Blanton, and C.~Yuan, ``Multi-party replicated secret sharing over a ring with applications to privacy-preserving machine learning,'' \emph{Cryptology ePrint Archive}, 2020.

\bibitem{SM18}
M.~Soleymani and H.~Mahdavifar, ``Distributed multi-user secret sharing,'' \emph{IEEE Transactions on Information Theory}, vol.~PP, 01 2018.

\bibitem{sperner}
E.~Sperner, ``Ein satz über untermengen einer endlichen menge,'' \emph{Mathematische Zeitschrift}, no.~1, pp. 544--548, 1928.

\bibitem{KMM21}
A.~Khalesi, M.~Mirmohseni, and M.~A. Maddah-Ali, ``The capacity region of distributed multi-user secret sharing,'' \emph{IEEE Journal on Selected Areas in Information Theory}, vol.~2, no.~3, pp. 1057--1071, 2021.

\bibitem{KS06}
W.~Kautz and R.~Singleton, ``Nonrandom binary superimposed codes,'' \emph{IEEE Trans. Inf. Theor.}, vol.~10, no.~4, p. 363–377, sep 2006.

\bibitem{PR11}
E.~Porat and A.~Rothschild, ``Explicit nonadaptive combinatorial group testing schemes,'' \emph{IEEE Trans. Inf. Theor.}, vol.~57, no.~12, p. 7982–7989, dec 2011.

\bibitem{IKO20}
H.~A. Inan, P.~Kairouz, and A.~Özgür, ``Sparse combinatorial group testing,'' \emph{IEEE Transactions on Information Theory}, vol.~66, no.~5, pp. 2729--2742, 2020.

\bibitem{steinerconjecture}
G.~G.~T. Balint, ``Construction in non-adaptive group testing steiner systems and latin squares,'' \emph{Ph.D. thesis, Illinois Institute of Technology}, 2014.

\bibitem{Lubell1966ASP}
D.~Lubell, ``A short proof of sperner’s lemma,'' \emph{Journal of Combinatorial Theory, Series A}, vol.~1, p. 299, 1966.

\end{thebibliography}
\end{document}